\documentclass[
  aps,
  prd,
  preprint,
  notitlepage,
  superscriptaddress,
  nofootinbib,
  longbibliography
]{revtex4-2}

\usepackage{amsmath,amssymb}
\usepackage{booktabs}
\usepackage{graphicx}
\usepackage{xcolor}

\newtheorem{theorem}{Theorem}
\newtheorem{corollary}{Corollary}
\newtheorem{conjecture}{Conjecture}
\newtheorem{remark}{Remark}
\newtheorem{problem}{Problem}
\newenvironment{proof}{\par\noindent\textit{Proof.}\ }{\hfill$\square$\par}

\newcommand{\amu}{m_{\mathrm u}}
\newcommand{\dx}{\Delta x}
\newcommand{\sinc}{\operatorname{sinc}}
\newcommand{\Lcsl}{\Lambda_{\mathrm{CSL}}}
\newcommand{\Ldp}{\Lambda_{\mathrm{DP}}}
\newcommand{\Krc}{K_{r_C}}
\newcommand{\Xic}{\Xi_{\mathrm{CSL/DP}}}
\newcommand{\ellstar}{\ell_*}

\begin{document}

\title{Geometry-Only CSL/DP Ratios and the Nonuniqueness of Decoherence Kernels}

\author{Randy Davila}
\email{randy@firstprinciples.com}
\affiliation{FirstPrinciples, Toronto, Ontario, Canada}
\affiliation{Department of Computational Applied Mathematics and Operations
Research, Rice University, Houston, Texas 77005, USA}

\author{Gerard J. Milburn}
\email{gerard.milburn@stfc.ac.uk}
\affiliation{School of Mathematics and Physics, The University of Sussex, UK}
\affiliation{National Quantum Computing Centre,
Rutherford Appleton Laboratory, UK}

\date{August 4, 2026}

\begin{abstract}
We study idealized levitated protocols that create spatial superpositions of massive test particles. For each protocol, we compare the dimensionless contrast-loss exponent of mass-proportional continuous spontaneous localization (CSL) with the Di\'osi--Penrose (DP) self-energy exponent $E_G\tau/\hbar$. We first prove that the point-particle CSL separation kernel has an exact random-unitary realization: Gaussian momentum kicks arriving at Poisson-distributed times produce the same unconditional decay of spatial coherence, although a pure state conditioned on the complete kick record remains pure. The separation kernel alone therefore specifies an operational decoherence law, not the occurrence of objective collapse. We then show that the ratio of the CSL and DP exponents is independent of particle mass and interrogation time. In the point-particle model it depends only on branch separation and an effective distance; for the standard GRW reference parameters, its resolved-superposition crossover is $x_*\approx1.91\,\mathrm{nm}$. For rigid spherical bodies with an arbitrary normalized radial mass profile, total mass, overall density scale, and interrogation time again cancel, leaving a dimensionless geometry factor. The results distinguish three requirements for a decisive experiment: detectable absolute effects, a controlled comparison of CSL and DP scales, and observables capable of discriminating physically different dynamics that share the same ensemble decoherence kernel.
\end{abstract}

\keywords{continuous spontaneous localization, Di\'osi--Penrose model,
levitated optomechanics, decoherence, random-unitary dynamics}

\maketitle

\section{Introduction}
\label{sec:introduction}
\emph{Levitated optomechanics} provides a controlled setting in which to test quantum mechanics with increasingly massive systems. A nanoparticle can be trapped, cooled, released, interrogated, and read using optical, magnetic, or hybrid control, while its isolation permits long-lived motional coherence. Reviews describe both the experimental control achieved in these platforms and their role in proposals for high-mass quantum superpositions~\cite{Aspelmeyer2014,Millen2020}. Mirror-superposition proposals and space-based concepts such as MAQRO further illustrate the foundational reach of this program~\cite{Marshall2003,Kaltenbaek2012,Kaltenbaek2015}. One motivation is to test dynamical-collapse models and related measures of macroscopic quantumness~\cite{Pearle1989,GhirardiPearleRimini1990,BassiGhirardi2003,Bassi2013,Nimmrichter2013,RomeroIsart2011}. Another is the Di\'osi--Penrose (DP) proposal, which associates the instability of a spatial superposition with the gravitational self-energy of the difference between its branch mass densities. Quantitative DP predictions depend on the chosen regularization and, in dissipative extensions, on additional dynamical assumptions~\cite{Diosi1987,Diosi1989,Penrose1996,Bahrami2014,Diosi2022}. Levitated and non-interferometric experiments continue to constrain such phenomenological modifications~\cite{Carlesso2022,DiBartolomeoCarlesso2024}.

Interpreting these experiments requires two distinctions. First, preparing and certifying a nonclassical mechanical state is not the same task as testing a modification of quantum dynamics. A protocol optimized for optomechanical correlations, for example, need not be optimal for detecting collapse-induced contrast loss~\cite{Pitchford2020}. Second, an unconditional master equation does not uniquely determine the stochastic evolution assigned to individual runs. Such a trajectory-level representation is commonly called an \emph{unravelling}. The same ensemble decoherence can result from inequivalent unravellings, including averages over unitary trajectories driven by unobserved classical noise~\cite{WisemanMilburn2010}. Ensemble visibility loss therefore does not, by itself, demonstrate that a wave function has objectively collapsed.

Against this background, we ask a specific design question. Consider the same idealized spatial superposition evaluated using the mass-proportional CSL contrast-loss exponent and the dimensionless DP self-energy exponent $E_G\tau/\hbar$. Which experimental parameters determine their ratio? Mass, branch separation, interrogation time, particle radius, and internal mass distribution all appear relevant at first sight. The conversion theorems below show that mass and time control the absolute magnitude of both effects but cancel from their ratio. Relative CSL/DP ordering is instead determined by the geometry of the object and of the superposition.

For a point particle, the CSL exponent contains the separation kernel
\[
\Krc(\dx)=1-\exp\!\left(-\frac{\dx^2}{4r_C^2}\right).
\]
Comparing this kernel with a regularized point-particle DP exponent gives a unique resolved-superposition crossover. For the conventional Ghirardi--Rimini--Weber (GRW) reference values $\lambda=10^{-17}\,\mathrm{s}^{-1}$ and $r_C=10^{-7}\,\mathrm m$, it occurs at $x_*\approx1.91\,\mathrm{nm}$. The cancellation of mass and time is not an artifact of the point-particle approximation: it survives when both quantities are written using finite-size kernels for any fixed normalized spherical mass profile.

The comparison does not identify CSL with DP; the two proposals have different physical motivations and dynamical content. It places their dimensionless instability exponents on the same protocol geometry and asks which is larger. Nor does observation of the CSL separation kernel uniquely identify CSL. We prove this by constructing a state-independent random-unitary process with exactly the same point-particle kernel. A visibility measurement can therefore constrain an unconditional generator while leaving its physical unravelling undetermined.

All DP statements are relative to the self-energy proxy and kernel normalization stated explicitly below. A different regularization changes the geometry factor and numerical crossover. It does not change the algebraic cancellation whenever the CSL and DP exponents retain the same $m^2\tau$ scaling.

The paper has three connected contributions. Section~\ref{sec:random-unitary} proves the random-unitary realization and establishes the interpretive limitation of the point-particle kernel. The following sections derive the point-particle crossover and the finite spherical-profile conversion theorem. The final analysis isolates radial mass placement as the remaining design variable and formulates a core-shell inversion conjecture supported by direct kernel evaluation.

\section{Model and Notation}
\label{sec:model_and_notation}
A protocol $P$ consists of a rigid spherical test mass, a spatial-superposition sequence, and an interrogation time. We first distinguish the point-particle contrast-loss kernel from a particular collapse unravelling. We then compare the point-particle CSL and DP scales with a radius-scale regularization, before replacing that comparison by finite-size kernels for a normalized spherical mass profile.

The model is deliberately minimal. It treats the branch separation $\dx$, the particle radius $R$, and the interrogation time $\tau$ as input parameters rather than deriving them from a trap, pulse sequence, or feedback protocol. This keeps the comparison independent of a particular platform. Experimental feasibility enters only later, through the question of which regions of $(R,\dx,\tau)$ are realistic for a given architecture.

\begin{itemize}
\item $m$ is the total particle mass and $\amu$ is the atomic mass unit.
\item $R$ is the particle radius.
\item $\dx>0$ is the separation between the two wave-packet centers.
\item $\tau$ is the interrogation or coherence time.
\item $\lambda$ and $r_C$ are the CSL collapse rate and localization length.
\item $G$ is Newton's gravitational constant and $\hbar$ is the reduced Planck constant.
\end{itemize}

The CSL separation kernel is
\[
\Krc(\dx)=1-\exp\!\left(-\frac{\dx^2}{4r_C^2}\right).
\]
It interpolates between the quadratic small-separation regime $\Krc(\dx)\sim\dx^2/(4r_C^2)$ and the resolved regime $\Krc(\dx)\to1$.

For the point-particle comparison we use the effective distance
\[
d_{\mathrm{eff}}=\max\{\dx,2R\}.
\]
The branch $d_{\mathrm{eff}}=\dx$ describes resolved superpositions in which the branch separation is at least the particle diameter. The branch $d_{\mathrm{eff}}=2R$ is a radius-scale regularization of the gravitational comparison when the branch centers are closer than one diameter.

The two dimensionless quantities compared below are $\Lcsl$ and $\Ldp$. The first is the CSL contrast-loss exponent. The second is the DP self-energy exponent $E_G\tau/\hbar$, interpreted as the elapsed interrogation time in units of the characteristic DP time $\hbar/E_G$. Thus $\Lcsl\gtrsim1$ and $\Ldp\gtrsim1$ each indicate an order-one effect within the corresponding phenomenological model. Their ratio is denoted by
\[
\Xic=\frac{\Lcsl}{\Ldp}.
\]

In the idealized CSL model, $e^{-\Lcsl}$ is the visibility multiplier for an off-diagonal coherence between the two branches. The quantity $\Ldp$ is not an ordinary unitary gravitational phase: it is the dimensionless self-energy scale used to define the characteristic DP reduction time. The ratio $\Xic$ therefore compares two phenomenological instability exponents assigned to the same branch geometry, without asserting that the underlying mechanisms are identical.

The figures use representative levitated-mass protocols, including free-expansion, matter-wave, collapse-sensitivity, optomechanical-correlation, and force-sensing examples. They illustrate the thresholds; the equalities and crossover statements are analytic.

There are consequently three separate questions. First, are $\Lcsl$ or $\Ldp$ large enough to be detected after preparation error, environmental decoherence, and readout noise are included? Second, which exponent is larger for the same idealized geometry? Third, if a particular contrast-loss law is observed, which physical dynamics generated it? The conversion theorems address the second question. Section~\ref{sec:random-unitary} shows why the third requires information beyond ensemble visibility.

\subsection{AI-assisted discovery and verification}

During the exploratory phase of this work, the first author used
\textsc{Theo-Conjecture} (version 0.0.1, development snapshot of July--August
2026) and OpenAI Codex (GPT-5, accessed July--August 2026). The
advisor-supervised loop organized parameterized model instances in an
inspectable registry, proposed candidate relations, and supported
counterexample searches, algebraic checks, literature organization, and
language revision. The first author directed these uses and checked the
retained outputs against explicit derivations, direct numerical evaluation,
or the cited literature. Both authors reviewed the resulting scientific claims
and take responsibility for the content of the manuscript.

\section{A CSL-Shaped Kernel Without Objective Collapse}
\label{sec:random-unitary}

We begin by isolating the operational content of the point-particle CSL kernel. Consider one translational degree of freedom with density operator $\rho$ and Hamiltonian $H$. Momentum kicks $\theta$, with dimensions of momentum, are drawn independently from a Gaussian distribution with mean zero and variance $\sigma_p^2$. The kicks arrive according to a Poisson process of rate $\gamma_m$, so that the probability of one kick in an interval $dt$ is $\gamma_m dt+o(dt)$. A kick acts through the unitary
\[
U_\theta=\exp\!\left(-\frac{i}{\hbar}\theta\hat x\right),
\]
and the corresponding Gaussian random-unitary channel is
\[
\mathcal E_{\sigma_p}[\rho]
=\int_{-\infty}^{\infty}
\frac{e^{-\theta^2/(2\sigma_p^2)}}{\sqrt{2\pi\sigma_p^2}}
U_\theta\rho U_\theta^\dagger\,d\theta.
\]
After averaging over both the kick values and the Poisson record, the ensemble state obeys
\begin{equation}
\frac{d\rho}{dt}
=-\frac{i}{\hbar}[H,\rho]
+\gamma_m\bigl(\mathcal E_{\sigma_p}[\rho]-\rho\bigr).
\label{eq:random-unitary-master}
\end{equation}

\begin{theorem}
Let $r_C=\hbar/(\sqrt{2}\sigma_p)$ and $\gamma_m=\lambda(m/\amu)^2$. Then the dissipative part of \eqref{eq:random-unitary-master} has position-basis matrix elements
\[
\left\langle x'\middle|
\gamma_m\bigl(\mathcal E_{\sigma_p}[\rho]-\rho\bigr)
\middle|x\right\rangle
=-\lambda\left(\frac{m}{\amu}\right)^2
\left[1-e^{-(x'-x)^2/(4r_C^2)}\right]\rho(x',x).
\]
Thus it reproduces exactly the point-particle CSL contrast-loss kernel used in this paper. Nevertheless, conditioned on the complete list of kick times and kick values, every trajectory is unitary and a pure initial state remains pure. The kernel therefore does not, by itself, distinguish objective collapse from classical random-unitary decoherence.
\end{theorem}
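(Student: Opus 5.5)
The plan is to compute the action of a single kick in the position basis, average over the Gaussian, and then treat the trajectory-level purity claim separately. Since $U_\theta$ is diagonal in position, $U_\theta|x\rangle=e^{-i\theta x/\hbar}|x\rangle$, we have
\[
\langle x'|U_\theta\rho U_\theta^\dagger|x\rangle=e^{-i\theta(x'-x)/\hbar}\rho(x',x).
\]
Integrating against the Gaussian density is then just its characteristic function evaluated at $k=(x'-x)/\hbar$. This gives
\[
\langle x'|\mathcal E_{\sigma_p}[\rho]|x\rangle=\exp\!\left(-\frac{\sigma_p^2(x'-x)^2}{2\hbar^2}\right)\rho(x',x).
\]
Substituting $\sigma_p^2=\hbar^2/(2r_C^2)$ turns the exponent into $-(x'-x)^2/(4r_C^2)$. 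Subtracting $\rho(x',x)$ and multiplying by $\gamma_m=\lambda(m/\amu)^2$ yields the displayed matrix element, with the Krc$(x'-x)$ bracket and the minus sign. I will note that the interchange of the $\theta$-integral with the matrix element is harmless: $\mathcal E_{\sigma_p}$ is a convex mixture of unitaries, so it is a bounded trace-preserving map and the integral converges in trace norm.

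For completeness I will also justify the master equation \eqref{eq:random-unitary-master} itself. This is the step where care is needed, though it is routine. Over an interval $dt$, with probability $1-\gamma_m dt+o(dt)$ there is no kick and the evolution is $e^{-iHdt/\hbar}$. With probability $\gamma_m dt$ one kick occurs, drawn from the Gaussian. Averaging and taking $dt\to0$ gives the generator. This is the standard derivation of a Poisson-driven random-unitary (jump) process. Combining it with the computation above shows that the dissipative part has exactly the point-particle CSL kernel with rate $\lambda(m/\amu)^2$.

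For the conditional statement, I will fix a complete record $\{(t_j,\theta_j)\}_{j=1}^N$ on $[0,t]$, which is almost surely finite since it comes from a Poisson process. The conditional evolution is then the ordered product
\[
V=e^{-iH(t-t_N)/\hbar}U_{\theta_N}e^{-iH(t_N-t_{N-1})/\hbar}\cdots U_{\theta_1}e^{-iHt_1/\hbar},
\]
which is a product of unitaries and hence unitary. A pure state $|\psi\rangle\langle\psi|$ is therefore mapped to the pure state $V|\psi\rangle\langle\psi|V^\dagger$. Averaging these pure states over the joint law of the record recovers the ensemble solution of \eqref{eq:random-unitary-master}. One can see this by expanding the solution as a Dyson series in jumps, with each term weighted by $e^{-\gamma_m t}\gamma_m^N$ times the Gaussian densities. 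This exhibits the unconditional state as a mixture of unitary trajectories.

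The concluding interpretive sentence then follows. The same generator, and hence the same visibility loss $e^{-\Lcsl}$, arises from a state-independent classical random-unitary unravelling in which no individual trajectory collapses. Ensemble contrast loss governed by $\Krc$ therefore cannot distinguish that unravelling from objective collapse. I expect no real obstacle. The only point to state carefully is the identification of the averaged conditional dynamics with \eqref{eq:random-unitary-master}, which is where the Dyson/jump expansion is used.
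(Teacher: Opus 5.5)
Your proposal is correct and matches the paper's proof: $U_\theta$ acts diagonally in position, so the channel reduces to the Gaussian characteristic function at $(x'-x)/\hbar$, the identification $\sigma_p^2=\hbar^2/(2r_C^2)$ gives the kernel, and the record-conditioned evolution is a product of unitaries and therefore preserves purity. Your Dyson/jump-expansion argument that these trajectories average to \eqref{eq:random-unitary-master} makes explicit a step the paper takes as given, while the paper also notes that the state-independent Poisson rate means conditioning on the record cannot localize the state.
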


\begin{proof}
For each pair $x,x'$, the Gaussian characteristic function gives
\[
\left\langle x'\middle|\mathcal E_{\sigma_p}[\rho]\middle|x\right\rangle
=\int_{-\infty}^{\infty}
\frac{e^{-\theta^2/(2\sigma_p^2)}}{\sqrt{2\pi\sigma_p^2}}
e^{-i\theta(x'-x)/\hbar}\,d\theta\;\rho(x',x)
=e^{-\sigma_p^2(x'-x)^2/(2\hbar^2)}\rho(x',x).
\]
The identifications $r_C=\hbar/(\sqrt{2}\sigma_p)$ and $\gamma_m=\lambda(m/\amu)^2$ yield the stated kernel. For a realized record $\{(t_j,\theta_j)\}_{j=1}^N$, the state is obtained by alternating Hamiltonian evolution with the unitaries $U_{\theta_j}$. Their product is unitary, so it preserves purity. Finally, the Poisson rate is state-independent; the record probabilities therefore contain no state-dependent measurement likelihood that could produce localization by Bayesian conditioning.
\end{proof}

The same construction fixes the momentum diffusion contributed by the kicks. Since $U_\theta^\dagger\hat p U_\theta=\hat p-\theta$, averaging over one kick adds $\sigma_p^2$ to $\langle p^2\rangle$. Hence,
\begin{equation}
\left.\frac{d}{dt}\langle p^2\rangle\right|_{\mathrm{kick}}
=\gamma_m\sigma_p^2
=\lambda\left(\frac{m}{\amu}\right)^2\frac{\hbar^2}{2r_C^2}.
\label{eq:momentum-diffusion}
\end{equation}
For a free particle, this is the complete derivative because the Hamiltonian conserves $p^2$; for a general $H$, it is only the kick-induced contribution. The construction therefore has operational consequences even though it does not specify objective collapse. However, it is not a many-body derivation of mass-proportional CSL. The rate $\gamma_m$ has been chosen to match the fixed-mass point-particle exponent, whereas genuine CSL supplies a stochastic localization law and a prescribed amplification mechanism. The theorem establishes the nonuniqueness of the observed kernel, not equivalence of the underlying theories.

\section{CSL--DP Conversion and Crossover Theorems}
\label{sec:conversion_and_crossover}
We use a point-particle CSL contrast-loss exponent with collapse rate $\lambda$ and localization length $r_C$:
\[
\Lcsl(P)=\lambda\left(\frac{m}{\amu}\right)^2\tau
\left(1-\exp\!\left(-\frac{\dx^2}{4r_C^2}\right)\right).
\]
This is the standard separation dependence for suppression of spatial coherences in the point-particle approximation.

For the gravitational comparison, define the regularized distance
\[
d_{\mathrm{eff}}=\max\{\dx,2R\}
\]
and the dimensionless DP exponent
\[
\Ldp(P)=\frac{Gm^2\tau}{\hbar d_{\mathrm{eff}}}.
\]
This expression uses $Gm^2/d_{\mathrm{eff}}$ as a transparent point-particle proxy for the DP self-energy while avoiding a singularity at zero separation. It is not the exact DP self-energy of an extended body. Such a calculation requires the full mass distribution and a regularization prescription, which are introduced in the finite-profile section.

The effective distance should be read as a point-particle regularization, not as a claim about the exact self-energy of an extended particle. The finite-profile analysis below replaces this proxy by kernels for spherical mass distributions; the cancellation persists in that setting.

The reason this simplified comparison is still useful is that it exposes the scaling before detailed geometry is introduced. If mass or time failed to cancel already at this level, then finite-size corrections would not change the basic message. Instead, the point-particle calculation shows that the relative scale is already controlled by geometry, motivating the more faithful finite-size calculation below.

The quantity $\Ldp$ uses the positive DP self-energy associated with the difference between the two branch mass densities. It is therefore insensitive to the sign convention for the Newtonian potential energy and should not be confused with the unitary Newtonian phase accumulated by either branch.

\begin{theorem}[Point-particle conversion]
For every protocol $P$ in the model above,
\[
\Xic(P):=\frac{\Lcsl(P)}{\Ldp(P)}
=\frac{\lambda\hbar d_{\mathrm{eff}}}{G\amu^2}
\left(1-\exp\!\left(-\frac{\dx^2}{4r_C^2}\right)\right).
\]
In particular, at fixed geometry $(R,\dx)$, the ratio is independent of the particle mass $m$ and the interrogation time $\tau$.
\end{theorem}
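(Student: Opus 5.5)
The plan is to divide the two defining expressions directly and read off the result. First I would check that the ratio is well defined. The hypotheses $\dx>0$ and $R\ge 0$ give $d_{\mathrm{eff}}=\max\{\dx,2R\}\ge\dx>0$, so $\Ldp(P)=Gm^2\tau/(\hbar d_{\mathrm{eff}})$ is strictly positive whenever $m>0$ and $\tau>0$. These are the only physically meaningful protocols. The kernel $\Krc(\dx)$ is also strictly positive for $\dx>0$, so both exponents are positive and finite, and $\Xic(P)$ is a well-defined positive number.

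Next I would form the quotient:
\[
\Xic(P)=\frac{\lambda (m/\amu)^2\tau\,\Krc(\dx)}{G m^2\tau/(\hbar d_{\mathrm{eff}})}
=\frac{\lambda m^2\tau\,\hbar d_{\mathrm{eff}}}{\amu^2\,G m^2\tau}\,\Krc(\dx).
\]
The factor $m^2\tau$ appears identically in the numerator and the denominator, so it cancels. This leaves $\lambda\hbar d_{\mathrm{eff}}\Krc(\dx)/(G\amu^2)$, and substituting $\Krc(\dx)=1-\exp(-\dx^2/(4r_C^2))$ gives the displayed formula.

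For the ``in particular'' clause, I would note that the remaining expression depends only on the model constants $\lambda$, $r_C$, $G$, $\hbar$, $\amu$ and on the geometric data. It enters through $d_{\mathrm{eff}}$, which is a function of $(R,\dx)$ alone, and through $\Krc(\dx)$. Once $(R,\dx)$ is fixed, varying $m$ or $\tau$ therefore leaves $\Xic$ unchanged.

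There is no genuine obstacle here. The only point needing care is the cancellation of the common factor $m^2\tau$, which requires it to be nonzero; this is why the positivity check comes first. I would also remark that the cancellation relies only on both exponents sharing the same $m^2\tau$ scaling, which anticipates the finite-profile generalization.
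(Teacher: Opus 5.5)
Your proposal is correct and follows the paper's proof: form $\Lcsl/\Ldp$, cancel the common factor $m^2\tau$, and observe that what remains depends only on $(R,\dx)$ and the model constants. Your preliminary positivity check ($d_{\mathrm{eff}}\ge\dx>0$ with $m,\tau>0$) is left implicit in the paper, and it correctly justifies both the division and the cancellation.
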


\begin{proof}
By definition,
\[
\Lcsl=\lambda\frac{m^2}{\amu^2}\tau\Krc(\dx),
\qquad
\Ldp=\frac{Gm^2\tau}{\hbar d_{\mathrm{eff}}}.
\]
Dividing cancels $m^2$ and $\tau$, giving
\[
\frac{\Lcsl}{\Ldp}=\frac{\lambda\hbar d_{\mathrm{eff}}}{G\amu^2}\Krc(\dx).
\]
\end{proof}

\begin{corollary}[Small- and large-separation regimes]
If $\dx\ll r_C$, then
\[
\Xic(P)=\frac{\lambda\hbar d_{\mathrm{eff}}}{G\amu^2}
\left[\frac{\dx^2}{4r_C^2}
+O\!\left(\frac{\dx^4}{r_C^4}\right)\right].
\]
Thus, in the resolved-separation regime $\dx\ge2R$, the ratio is cubic to leading order in $\dx$; in the radius-regularized regime $\dx<2R$, it is quadratic in $\dx$ and linear in $R$. If $\dx\gg r_C$, then
\[
\Xic(P)\sim\frac{\lambda\hbar d_{\mathrm{eff}}}{G\amu^2},
\]
because the CSL separation kernel approaches unity.
\end{corollary}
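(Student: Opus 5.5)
The plan is to start from the exact formula of the point-particle conversion theorem,
\[
\Xic(P)=\frac{\lambda\hbar d_{\mathrm{eff}}}{G\amu^2}\,\Krc(\dx),
\]
and treat the prefactor $\lambda\hbar d_{\mathrm{eff}}/(G\amu^2)$ as an exact multiplier that does not depend on $\dx/r_C$. All asymptotic content then sits in the scalar function $\Krc$, so the corollary reduces to expanding $\Krc$ in the two limits and multiplying back.

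\textbf{Small separations.} For $\dx\ll r_C$, set $u=\dx^2/(4r_C^2)$. Taylor's theorem with remainder for $1-e^{-u}$ gives
\[
1-e^{-u}=u-\tfrac12 u^2 e^{-\xi}\quad\text{for some }0<\xi<u.
\]
Hence $|1-e^{-u}-u|\le u^2/2$ for all $u\ge0$, so the remainder is $O(\dx^4/r_C^4)$ uniformly. Multiplying by the prefactor gives the displayed expansion.

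The scaling statements then follow from the two branches of $d_{\mathrm{eff}}=\max\{\dx,2R\}$:
\begin{itemize}
\item If $\dx\ge 2R$, then $d_{\mathrm{eff}}=\dx$. The leading term becomes $\lambda\hbar\dx^3/(4G\amu^2r_C^2)$, which is cubic in $\dx$.
\item If $\dx<2R$, then $d_{\mathrm{eff}}=2R$. The leading term becomes $\lambda\hbar R\dx^2/(2G\amu^2r_C^2)$, which is quadratic in $\dx$ and linear in $R$.
\end{itemize}

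\textbf{Large separations.} For $\dx\gg r_C$, we have $\Krc(\dx)=1-e^{-\dx^2/(4r_C^2)}\to1$, and the correction is exponentially small. Therefore the ratio of $\Xic(P)$ to $\lambda\hbar d_{\mathrm{eff}}/(G\amu^2)$ tends to $1$, which is the meaning of $\sim$ here. Nothing needs to be assumed about $R$ in this limit, since the prefactor is carried exactly.

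\textbf{Main point requiring care.} The only delicate step is the interpretation of ``cubic'' and ``linear in $R$''. These claims hold at fixed $r_C$, and they require that the $O$-term carry the same prefactor $d_{\mathrm{eff}}$ within each branch. The uniform bound $u^2/2$ guarantees this, so the asymptotics are uniform in $R$ and $m$, and $\tau$ plays no role.
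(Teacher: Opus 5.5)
Your proposal is correct and follows the paper's proof. Like the paper, you carry the prefactor $\lambda\hbar d_{\mathrm{eff}}/(G\amu^2)$ exactly, expand $1-e^{-\dx^2/(4r_C^2)}$ to first order for $\dx\ll r_C$, and use that the exponential vanishes for $\dx\gg r_C$; your Lagrange-remainder bound $|1-e^{-u}-u|\le u^2/2$ simply makes the paper's $O(\dx^4)$ remainder explicit and uniform in $R$ within each branch of $d_{\mathrm{eff}}$.
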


\begin{proof}
The expansion follows from
\[
1-e^{-\dx^2/(4r_C^2)}=\frac{\dx^2}{4r_C^2}+O(\dx^4).
\]
The large-separation asymptotic follows because the exponential term tends to zero.
\end{proof}

\begin{remark}
Mass and interrogation time remain essential to detectability: they increase both $\Lcsl$ and $\Ldp$. They do not change the ratio because the two exponents have the same quadratic dependence on mass and linear dependence on time. Within this model, changing the relative ordering therefore requires changing the separation, radius, normalized mass profile, or collapse parameters rather than merely increasing $m$ or $\tau$.
\end{remark}

Define the crossover length
\[
\ellstar=\frac{G\amu^2}{\lambda\hbar}.
\]
Then $\Xic\ge1$ if and only if
\[
d_{\mathrm{eff}}\Krc(\dx)\ge\ellstar.
\]
Thus the comparison is governed by the single geometry threshold
\[
d_{\mathrm{eff}}\Krc(\dx)=\ellstar.
\]

\begin{theorem}[Resolved-superposition crossover]
Assume $d_{\mathrm{eff}}=\dx$, equivalently $\dx\ge2R$. Then there is a unique number $x_*>0$ such that
\[
x_*\left(1-\exp\!\left(-\frac{x_*^2}{4r_C^2}\right)\right)=\ellstar.
\]
Moreover,
\[
\Lcsl(P)\ge\Ldp(P)\quad\Longleftrightarrow\quad\dx\ge x_*.
\]
For the GRW reference values $\lambda=10^{-17}\,\mathrm{s}^{-1}$ and $r_C=10^{-7}\,\mathrm m$,
\[
\ellstar=1.745129893\times10^{-13}\,\mathrm m,
\qquad
x_*=1.911184110\,\mathrm{nm}.
\]
\end{theorem}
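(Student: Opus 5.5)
The plan is to study the single function
\[
f(x)=x\left(1-e^{-x^2/(4r_C^2)}\right),\qquad x>0,
\]
and show that it is a continuous, strictly increasing bijection from $(0,\infty)$ onto $(0,\infty)$. Existence and uniqueness of $x_*$ then follow at once.

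\textbf{Monotonicity.} Both factors $x$ and $\Krc(x)=1-e^{-x^2/(4r_C^2)}$ are positive and strictly increasing on $(0,\infty)$. Their product is therefore strictly increasing. Equivalently, one can check directly that
\[
f'(x)=\Krc(x)+\frac{x^2}{2r_C^2}e^{-x^2/(4r_C^2)}>0 .
\]

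\textbf{Range.} As $x\to0^+$, we have $f(x)\sim x^3/(4r_C^2)\to0$. As $x\to\infty$, we have $f(x)\sim x\to\infty$. By the intermediate value theorem, every positive value is attained exactly once. In particular there is a unique $x_*$ with $f(x_*)=\ellstar$, since $\ellstar>0$.

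\textbf{Equivalence with the exponent ordering.} Under the hypothesis $d_{\mathrm{eff}}=\dx$, the point-particle conversion theorem together with the definition of $\ellstar$ gives
\[
\Xic=\frac{f(\dx)}{\ellstar}.
\]
Since $\Lcsl$ and $\Ldp$ are positive, $\Lcsl\ge\Ldp$ holds if and only if $\Xic\ge1$, that is, if and only if $f(\dx)\ge f(x_*)$. Strict monotonicity of $f$ turns this into $\dx\ge x_*$.

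\textbf{Numerical values.} First compute $\ellstar=G\amu^2/(\lambda\hbar)$ from CODATA values of $G$, $\amu$ and $\hbar$ with $\lambda=10^{-17}\,\mathrm{s}^{-1}$; this should give $\ellstar\approx1.745\times10^{-13}\,\mathrm m$.

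Next, since $x_*\ll r_C$ is expected, use the leading-order small-separation approximation from the corollary. This gives $x_*\approx(4r_C^2\ellstar)^{1/3}$, with $r_C^2=10^{-14}\,\mathrm m^2$. Then $4r_C^2\ellstar\approx6.98\times10^{-27}\,\mathrm m^3$, so $x_*\approx1.91\times10^{-9}\,\mathrm m$.

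To pin down the stated ten digits, refine this estimate by Newton iteration or bisection on $f(x)-\ellstar$. Both converge because $f$ is smooth and monotone. The correction from the next term of the expansion is of relative size $x_*^2/(8r_C^2)\sim5\times10^{-5}$, so one or two iterations suffice.

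The only genuine obstacle is this last numerical step, specifically keeping the reported digits consistent with the chosen constant values. The analytic part is elementary.
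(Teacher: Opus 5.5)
Your proposal is correct and follows the paper's route. Both arguments show strict monotonicity of $f(x)=x\bigl(1-e^{-x^2/(4r_C^2)}\bigr)$ via $f'>0$, use the limits at $0$ and $\infty$, and use the identity $\Xic=f(\dx)/\ellstar$ in the resolved regime, with $x_*$ obtained by one-dimensional root finding. Your cube-root starting value $(4r_C^2\ellstar)^{1/3}$ refined by Newton iteration replaces the paper's bisection, and it does reproduce $x_*\approx1.911184\,\mathrm{nm}$.
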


\begin{proof}
Let
\[
f(x)=x\left(1-e^{-x^2/(4r_C^2)}\right).
\]
For $x>0$,
\[
f'(x)=1-e^{-x^2/(4r_C^2)}+\frac{x^2}{2r_C^2}e^{-x^2/(4r_C^2)}>0.
\]
Also $f(0)=0$ and $f(x)\to\infty$ as $x\to\infty$. Hence there is a unique $x_*>0$ with $f(x_*)=\ellstar$. In the resolved regime,
\[
\Xic=\frac{\dx\Krc(\dx)}{\ellstar}=\frac{f(\dx)}{\ellstar},
\]
so $\Xic\ge1$ if and only if $\dx\ge x_*$. The numerical value follows by one-dimensional bisection.
\end{proof}

\begin{corollary}[Nanometer-scale rule of thumb]
Under the same GRW parameters and resolved-superposition assumption, any protocol with $\dx\ge2\,\mathrm{nm}$ has $\Lcsl>\Ldp$ in this model. Conversely, resolved protocols below $x_*\approx1.91\,\mathrm{nm}$ satisfy $\Lcsl<\Ldp$.
\end{corollary}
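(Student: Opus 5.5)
The corollary follows from the resolved-superposition crossover theorem together with strict monotonicity of $f(x)=x\,\Krc(x)$. I will first make the equivalence in that theorem strict. In the resolved regime $\Xic=f(\dx)/\ellstar$, and the proof of the crossover theorem already shows $f'(x)>0$ for $x>0$. Hence $f$ is strictly increasing on $(0,\infty)$. Since $f(x_*)=\ellstar$, this gives $\dx>x_*\Rightarrow \Xic>1$ and $\dx<x_*\Rightarrow\Xic<1$. Because $\Ldp>0$ whenever $m,\tau>0$, these translate to $\Lcsl>\Ldp$ and $\Lcsl<\Ldp$ respectively.

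For the first claim it remains to check $2\,\mathrm{nm}>x_*$. Rather than relying only on the bisection value $x_*=1.911184110\,\mathrm{nm}$, I would certify the inequality directly by showing $f(2\,\mathrm{nm})>\ellstar$. With $r_C=10^{-7}\,\mathrm m$ we have $x^2/(4r_C^2)=10^{-4}$ at $x=2\,\mathrm{nm}$. The bound $1-e^{-u}\ge u-u^2/2$ then gives
\[
f(2\,\mathrm{nm})\ge 2\times10^{-9}\,\mathrm m\times\bigl(10^{-4}-5\times10^{-9}\bigr)\approx 1.9999\times10^{-13}\,\mathrm m .
\]
This exceeds $\ellstar\approx1.745\times10^{-13}\,\mathrm m$ by a wide margin. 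By monotonicity, every $\dx\ge2\,\mathrm{nm}$ then satisfies $f(\dx)>\ellstar$, so $\Lcsl>\Ldp$.

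The converse is the strict half of the crossover theorem: resolved protocols with $\dx<x_*$ have $f(\dx)<\ellstar$, hence $\Lcsl<\Ldp$. If an explicit certificate for the location of $x_*$ is wanted, the bound $1-e^{-u}\le u$ gives $f(x)\le x^3/(4r_C^2)$. Solving $x^3/(4r_C^2)=\ellstar$ yields the lower bound $x_*\ge(4r_C^2\ellstar)^{1/3}\approx1.911\,\mathrm{nm}$. This is consistent with the quoted bisection value, since the quadratic correction is only of relative size $10^{-4}$.

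The main obstacle is only numerical: confirming the value of $\ellstar=G\amu^2/(\lambda\hbar)$ from CODATA constants with enough accuracy. The margin between $2\,\mathrm{nm}$ and $x_*$ is about $4.5\%$, which corresponds to roughly $13\%$ in $f$, so ordinary precision suffices.
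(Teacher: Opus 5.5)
Your proposal is correct and follows the paper's route: the corollary is read off from the resolved-crossover theorem together with $2\,\mathrm{nm}>x_*$. Your additions are sound refinements of that same argument rather than a different approach: you obtain the strict inequalities from $f'>0$, and you certify $f(2\,\mathrm{nm})\ge 1.9999\times10^{-13}\,\mathrm m>\ellstar$ directly instead of relying on the bisected value of $x_*$.
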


\begin{proof}
This follows immediately from the previous theorem and $2\,\mathrm{nm}>x_*$.
\end{proof}

\begin{corollary}[Resolved-crossover scaling]
When $x_*\ll r_C$, the resolved crossover obeys
\[
x_*=(4r_C^2\ellstar)^{1/3}
\left(1+O\!\left(\frac{x_*^2}{r_C^2}\right)\right)
=\left(\frac{4G\amu^2r_C^2}{\lambda\hbar}\right)^{1/3}
\left(1+O\!\left(\frac{x_*^2}{r_C^2}\right)\right).
\]
Thus, in this regime, $x_*$ scales as $r_C^{2/3}\lambda^{-1/3}$. For the GRW values above, $x_*/r_C\approx1.91\times10^{-2}$, so the small-separation approximation is self-consistent.
\end{corollary}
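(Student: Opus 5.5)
The plan is to start from the defining equation of the resolved crossover in the previous theorem, $f(x_*)=\ellstar$ with $f(x)=x\Krc(x)$, and treat it as a perturbed cubic. Write $u=x_*^2/(4r_C^2)$ and use the elementary bound $1-e^{-u}=u\,(1+\epsilon(u))$ with $-u/2\le\epsilon(u)\le 0$ for $u\ge0$. This follows from $u-u^2/2\le 1-e^{-u}\le u$. The equation then becomes
\[
\frac{x_*^3}{4r_C^2}\bigl(1+\epsilon(u)\bigr)=\ellstar,
\qquad\text{so}\qquad
x_*=(4r_C^2\ellstar)^{1/3}\bigl(1+\epsilon(u)\bigr)^{-1/3}.
\]

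Next I control the correction factor. Since $|\epsilon(u)|\le u/2=x_*^2/(8r_C^2)$, the hypothesis $x_*\ll r_C$ makes $\epsilon$ small. A first-order expansion of $(1+\epsilon)^{-1/3}$ then gives $1+O(\epsilon)=1+O(x_*^2/r_C^2)$, with an explicit constant for $u\le1$, say. Substituting $\ellstar=G\amu^2/(\lambda\hbar)$ gives the second displayed form. The scaling $x_*\propto r_C^{2/3}\lambda^{-1/3}$ is read off the leading term. The correction is subleading because it is controlled by the same small ratio $x_*/r_C$.

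The only subtle point is that the error is stated in terms of $x_*$ itself rather than the parameters, which makes the argument look circular. It is not. The previous theorem already guarantees that $x_*$ exists and is unique, so the bound is an a posteriori estimate on a known number. If a parameter-only version is wanted, note that $\epsilon\le0$ gives $x_*\ge(4r_C^2\ellstar)^{1/3}$. The reverse comparison follows from monotonicity of $f$: check that $f$ evaluated at twice the leading value exceeds $\ellstar$ once $(4r_C^2\ellstar)^{1/3}\ll r_C$. This shows $x_*\ll r_C$ is equivalent to $\ellstar\ll r_C$.

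Finally, the self-consistency check is numerical. Dividing the previous theorem's value $x_*=1.911184110\,\mathrm{nm}$ by $r_C=10^{-7}\,\mathrm m$ gives $x_*/r_C\approx1.91\times10^{-2}$. The relative correction is then of order $10^{-4}$. As a cross-check, $(4r_C^2\ellstar)^{1/3}$ with $\ellstar=1.745\times10^{-13}\,\mathrm m$ gives about $1.911\,\mathrm{nm}$, consistent with the bisection value.
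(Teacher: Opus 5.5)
Your argument is correct and is essentially the paper's proof: you expand $1-e^{-x_*^2/(4r_C^2)}$ to leading order, turn the crossover equation into a perturbed cubic, and solve it, and your two-sided bound $-u/2\le\epsilon(u)\le 0$ together with the a posteriori and parameter-only remarks simply make the paper's $O(x_*^2/r_C^2)$ explicit. One minor numerical point: the actual relative correction is $(1+\epsilon)^{-1/3}-1\approx x_*^2/(24r_C^2)\approx1.5\times10^{-5}$, somewhat smaller than the ``order $10^{-4}$'' you quote, which is the size of $x_*^2/r_C^2$ itself rather than of the correction.
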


\begin{proof}
For $x\ll r_C$,
\[
x\left(1-e^{-x^2/(4r_C^2)}\right)
=\frac{x^3}{4r_C^2}\left(1+O\!\left(\frac{x^2}{r_C^2}\right)\right).
\]
Setting this expression equal to $\ellstar$ gives the stated scaling.
\end{proof}

\begin{theorem}[Radius-regularized crossover surface]
Assume $\dx<2R$, so that $d_{\mathrm{eff}}=2R$. Then
\[
\Lcsl(P)\ge\Ldp(P)
\quad\Longleftrightarrow\quad
R\ge R_*(\dx):=\frac{\ellstar}{2\Krc(\dx)}.
\]
For $\dx\ll r_C$,
\[
R_*(\dx)\sim\frac{2r_C^2\ellstar}{\dx^2}.
\]
\end{theorem}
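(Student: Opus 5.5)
The plan is to reduce the statement to the point-particle conversion theorem and the threshold reformulation that precedes the resolved crossover. By the conversion theorem, $\Xic(P)=d_{\mathrm{eff}}\Krc(\dx)/\ellstar$, and $\Lcsl(P)\ge\Ldp(P)$ is equivalent to $\Xic(P)\ge1$ because $\Ldp(P)>0$ for $m,\tau>0$. Under the hypothesis $\dx<2R$, we have $d_{\mathrm{eff}}=\max\{\dx,2R\}=2R$. The comparison therefore becomes $2R\,\Krc(\dx)\ge\ellstar$.

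The second step is to divide by $2\Krc(\dx)$, which requires this factor to be strictly positive. Since $\dx>0$ by the model assumptions, $\exp(-\dx^2/(4r_C^2))<1$, so $\Krc(\dx)>0$. Division preserves the inequality and gives exactly $R\ge\ellstar/(2\Krc(\dx))=R_*(\dx)$. The converse holds by multiplying back, so both directions of the equivalence follow. This sign check is the only point needing care; the case $\dx=0$ is excluded by $\dx>0$, and it would make $R_*$ infinite.

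For the asymptotic statement, I will reuse the expansion from the small-separation corollary: $\Krc(\dx)=\frac{\dx^2}{4r_C^2}\bigl(1+O(\dx^2/r_C^2)\bigr)$ as $\dx/r_C\to0$. Inverting this gives $1/\Krc(\dx)=\frac{4r_C^2}{\dx^2}\bigl(1+O(\dx^2/r_C^2)\bigr)$. Substituting into $R_*$ yields $R_*(\dx)=\frac{2r_C^2\ellstar}{\dx^2}\bigl(1+O(\dx^2/r_C^2)\bigr)$, which is the claimed $\sim$ relation.

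I would also remark that the equivalence is only meaningful on the set where $\dx<2R$. The crossover surface is therefore the part of the graph $R=R_*(\dx)$ that lies in the region $R>\dx/2$. No step here is a genuine obstacle: the result is a direct algebraic consequence of earlier results, and the main task is stating the positivity of $\Krc$ explicitly.
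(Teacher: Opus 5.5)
Your proposal is correct and follows the paper's proof: substitute $d_{\mathrm{eff}}=2R$ into the conversion formula $\Xic=d_{\mathrm{eff}}\Krc(\dx)/\ellstar$, solve $\Xic\ge1$ for $R$, and use $\Krc(\dx)=\dx^2/(4r_C^2)+O(\dx^4)$ for the asymptotic. Your explicit checks that $\Lambda_{\mathrm{DP}}>0$ and $\Krc(\dx)>0$ for $\dx>0$ justify the division steps that the paper leaves implicit.
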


\begin{proof}
In the radius-regularized regime,
\[
\Xic=\frac{2R\Krc(\dx)}{\ellstar}.
\]
Thus $\Xic\ge1$ is equivalent to $R\ge\ellstar/(2\Krc(\dx))$. The asymptotic follows from $\Krc(\dx)=\dx^2/(4r_C^2)+O(\dx^4)$.
\end{proof}

\begin{remark}
The inequality $\Lcsl>\Ldp$ is not an observability criterion. Both exponents may be much smaller than one, in which case neither predicted effect is measurable. The crossover is only a relative calibration surface. A viable proposal must also satisfy an absolute sensitivity requirement, for example $\max\{\Lcsl,\Ldp\}\gtrsim1$, after ordinary environmental decoherence and instrumental noise have been included.
\end{remark}

Figures~\ref{fig:phase-diagram} and \ref{fig:ratio-threshold} display the point-particle conversion law in complementary forms. Figure~\ref{fig:phase-diagram} shows the full geometry plane and separates the resolved and radius-regularized branches. Figure~\ref{fig:ratio-threshold} instead plots the dimensionless ratio directly, making clear that crossing $\Xic=1$ compares the two exponents but does not certify that either is experimentally detectable. The protocol markers are illustrative parameter points; the boundary curves are analytic.

\begin{figure*}[tbp]
\centering
\includegraphics[width=0.94\textwidth]{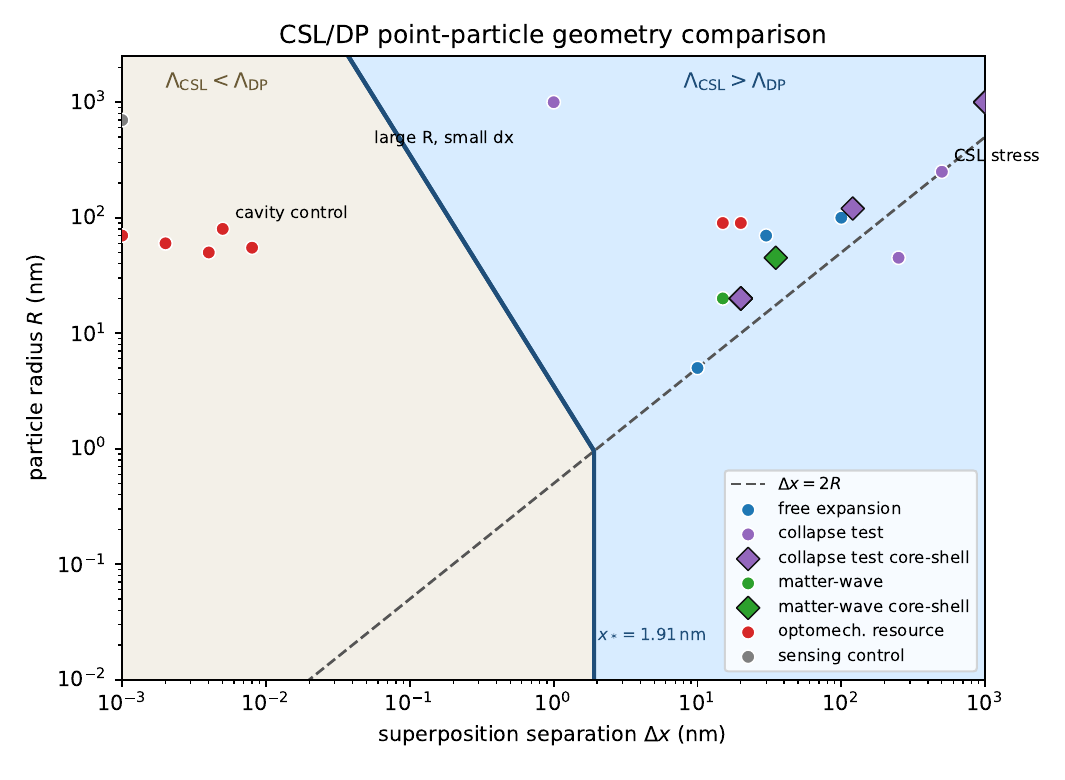}
\caption{Point-particle geometry diagram for the GRW reference parameters. The blue region satisfies $\Lcsl>\Ldp$, whereas the beige region satisfies $\Lcsl<\Ldp$. The solid blue curve is the exact threshold $d_{\mathrm{eff}}\Krc(\dx)=\ellstar$. The dashed line marks only the change of regularization at $\dx=2R$: below it, $d_{\mathrm{eff}}=\dx$; above it, $d_{\mathrm{eff}}=2R$. The short vertical segment marks the resolved crossover $x_*\approx1.91\,\mathrm{nm}$.}
\label{fig:phase-diagram}
\end{figure*}

\begin{figure*}[tbp]
\centering
\includegraphics[width=0.94\textwidth]{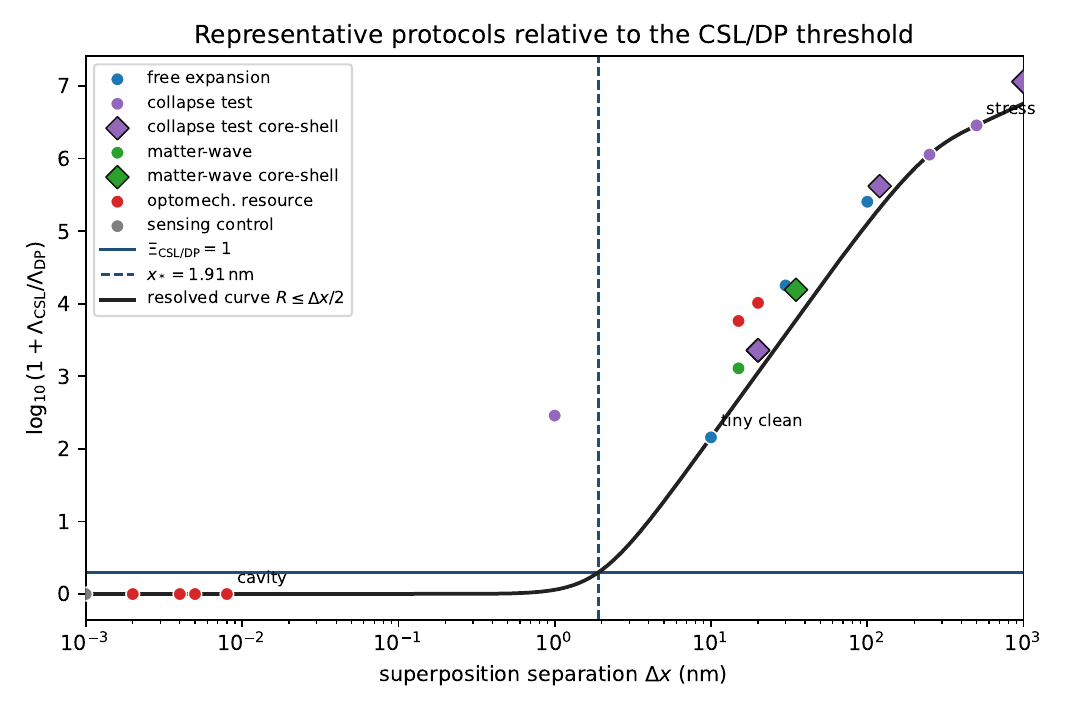}
\caption{The point-particle ratio $\Xic$ for the illustrative protocol set. The horizontal line $\Xic=1$ is equivalent to $\Lcsl=\Ldp$. The black curve is the analytic resolved-regime relation, valid for $R\le\dx/2$. Markers above or below the line indicate relative ordering only; absolute detectability must be assessed separately.}
\label{fig:ratio-threshold}
\end{figure*}

\section{Finite Spherical Mass Profiles}

The point-particle comparison exposes the cancellation, but extended rigid bodies and mass-density profiles are standard in collapse-model calculations \cite{FerialdiBassi2020}, and the DP proposal is normally expressed through the gravitational self-energy of the difference between two mass distributions \cite{Diosi1987,Diosi1989,Penrose1996,Diosi2022}. Thus the useful question is whether the cancellation survives finite-size form factors and nonuniform radial density.

The normalization below separates scale from shape. The total mass is $m$, while $\varrho$ records only how that mass is distributed inside the ball. Changing the material density scale while keeping the same normalized profile changes $m$, but not $\varrho$.

This distinction is important experimentally. Two particles can have the same radius and total mass but different radial composition, for instance through a dense core, a light coating, or a shell-heavy design. Conversely, the same normalized profile can be made heavier by changing the overall density scale. The theorem below says that the second operation does not affect the CSL/DP ratio, while the first can.

Let $\varrho$ be a rigid spherically symmetric density profile supported in a ball of radius $R$, normalized to total mass one after rescaling:
\[
\int_{|x|\le R}\varrho(x)\,d^3x=1.
\]
The physical mass density is $m\varrho(x)$. Write $F_\varrho(u)$ for its normalized radial Fourier form factor,
\[
F_\varrho(u)
=\int_{|x|\le R}\varrho(x)e^{iu\hat z\cdot x/R}\,d^3x
=4\pi\int_0^R r^2\varrho(r)\frac{\sin(ur/R)}{ur/R}\,dr,
\]
where the last expression uses spherical symmetry and the quotient is interpreted continuously at $u=0$. In particular $F_\varrho(0)=1$. With $\sinc z=\sin z/z$, define the finite-size CSL kernel
\[
K_{\mathrm{CSL}}^\varrho(\dx,R;r_C)
=\frac{4}{\sqrt\pi}\int_0^\infty q^2e^{-q^2}
\left|F_\varrho\!\left(q\frac{R}{r_C}\right)\right|^2
\bigl(1-\sinc(q\dx/r_C)\bigr)\,dq,
\]
with the continuous value at $\dx=0$. Define the DP shape kernel by
\[
K_{\mathrm{DP}}^\varrho(\dx/R)
=\frac{2}{\pi}\int_0^\infty |F_\varrho(u)|^2
\bigl(1-\sinc(u\dx/R)\bigr)\,du.
\]

These normalizations are chosen so that the CSL kernel reduces to $\Krc(\dx)$ in the point-particle limit and the DP kernel has the standard homogeneous-sphere form stated below. Other regularization conventions change the kernels and the resulting numerical crossover, but the conversion argument applies unchanged once the two kernels are specified consistently.

The form factor $F_\varrho$ is the object through which internal mass placement enters both theories. In the CSL kernel it is sampled with a Gaussian weight set by $r_C$, so mass features at wavelengths much shorter than $r_C$ are suppressed. In the DP kernel it enters the Newtonian self-energy of the difference between two displaced copies of the same mass distribution. Thus the two kernels probe the same radial profile through different filters.

For such a profile, define the two displaced normalized densities by
\[
\varrho_\pm(x)=\varrho(x\mp\dx\hat z/2).
\]
With the DP convention
\[
E_G^\varrho
=\frac{Gm^2}{2}\int_{\mathbb R^3}\!\int_{\mathbb R^3}
\frac{(\varrho_+(x)-\varrho_-(x))(\varrho_+(y)-\varrho_-(y))}{|x-y|}
\,d^3x\,d^3y,
\]
the Fourier representation of the Newton kernel gives
\[
E_G^\varrho=\frac{Gm^2}{R}K_{\mathrm{DP}}^\varrho(\dx/R),
\qquad
\Lambda_{\mathrm{DP}}^\varrho=\frac{E_G^\varrho\tau}{\hbar}.
\]
The corresponding finite-size CSL contrast-loss exponent is
\[
\Lambda_{\mathrm{CSL}}^\varrho
=\lambda\left(\frac{m}{\amu}\right)^2\tau
K_{\mathrm{CSL}}^\varrho(\dx,R;r_C).
\]

The displacement direction is immaterial because the profile is spherical. For nonspherical bodies, the same Fourier strategy can be used, but the ratio will generally depend on orientation and on the full three-dimensional form factor.

\begin{theorem}[Finite spherical-profile conversion]
Assume $\dx>0$ and $K_{\mathrm{DP}}^\varrho(\dx/R)>0$. For every rigid spherically symmetric profile $\varrho$ in the model above,
\[
\frac{\Lambda_{\mathrm{CSL}}^\varrho}{\Lambda_{\mathrm{DP}}^\varrho}
=\frac{\lambda\hbar R}{G\amu^2}
\frac{K_{\mathrm{CSL}}^\varrho(\dx,R;r_C)}
{K_{\mathrm{DP}}^\varrho(\dx/R)}
=\frac{R K_{\mathrm{CSL}}^\varrho(\dx,R;r_C)}
{\ellstar K_{\mathrm{DP}}^\varrho(\dx/R)}.
\]
In particular, for a fixed normalized spherical profile, the ratio is independent of the total particle mass $m$, the overall density scale, and the interrogation time $\tau$.
\end{theorem}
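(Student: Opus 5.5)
The argument mirrors the proof of the point-particle conversion theorem: write both finite-size exponents explicitly in the form (prefactor)$\times m^2\tau\times$(shape kernel), and divide. From the definitions above,
\[
\Lambda_{\mathrm{CSL}}^\varrho=\lambda\frac{m^2}{\amu^2}\tau\,K_{\mathrm{CSL}}^\varrho(\dx,R;r_C),
\qquad
\Lambda_{\mathrm{DP}}^\varrho=\frac{Gm^2\tau}{\hbar R}K_{\mathrm{DP}}^\varrho(\dx/R),
\]
where the second expression uses the stated identity $E_G^\varrho=(Gm^2/R)K_{\mathrm{DP}}^\varrho(\dx/R)$. The hypothesis $K_{\mathrm{DP}}^\varrho(\dx/R)>0$, together with $m,\tau>0$, makes $\Lambda_{\mathrm{DP}}^\varrho>0$, so the quotient is well defined. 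Dividing cancels $m^2$ and $\tau$ and leaves $\lambda\hbar R K_{\mathrm{CSL}}^\varrho/(G\amu^2 K_{\mathrm{DP}}^\varrho)$. Substituting $\ellstar=G\amu^2/(\lambda\hbar)$ gives the second form.

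For the independence claim, I will note that both kernels are built only from $F_\varrho$, $\dx$, $R$ and $r_C$. Here $F_\varrho$ depends on the normalized profile $\varrho$, which integrates to one and carries no mass scale. Rescaling the density by a constant changes only $m$ in the physical density $m\varrho$, leaving $\varrho$, and hence $F_\varrho$ and both kernels, unchanged. The overall density scale therefore enters only through $m$, which has already cancelled. Time enters only through the common factor $\tau$.

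The one step needing care is the Fourier identity for $E_G^\varrho$, which the theorem takes as given in the model. If I verified it, I would take Fourier transforms: $\widehat{\varrho_+-\varrho_-}(k)=\hat\varrho(k)\cdot 2i\sin(k\dx\cos\vartheta/2)$ and $\widehat{1/|x|}=4\pi/k^2$. Averaging over directions gives $|2\sin(\cdot)|^2\to 2(1-\sinc(k\dx))$. Rescaling $k=u/R$ then yields the $(2/\pi)\int|F_\varrho|^2(1-\sinc)\,du$ form with prefactor $Gm^2/R$. Only consistency of the normalization constants matters here, since the ratio statement itself is purely algebraic.

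I also need $K_{\mathrm{CSL}}^\varrho$ to be finite. This follows from $|F_\varrho|\le1$, $0\le 1-\sinc\le 2$, and the Gaussian weight $q^2e^{-q^2}$.
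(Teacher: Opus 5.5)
Your proof is correct and follows the paper's argument essentially verbatim: write both finite-size exponents as $m^2\tau$ times a shape kernel, divide to cancel $m^2$ and $\tau$, and substitute $\ell_*=G m_{\mathrm u}^2/(\lambda\hbar)$. Your additional checks go beyond what the paper writes but are consistent with its definitions: the Fourier derivation of $E_G^\varrho$, the finiteness of the CSL kernel, and the explicit positivity of the denominator.
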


\begin{proof}
Both sides have the same mass-squared/time-linear scaling:
\[
\Lambda_{\mathrm{CSL}}^\varrho
=\lambda\frac{m^2}{\amu^2}\tau K_{\mathrm{CSL}}^\varrho(\dx,R;r_C),
\qquad
\Lambda_{\mathrm{DP}}^\varrho
=\frac{Gm^2\tau}{\hbar R}K_{\mathrm{DP}}^\varrho(\dx/R).
\]
Dividing cancels $m^2$ and $\tau$. Substituting $\ellstar=G\amu^2/(\lambda\hbar)$ gives the equivalent geometry-factor form.
\end{proof}

\begin{corollary}[Spherical-profile dominance surface]
For rigid spherically symmetric profiles with $\dx>0$,
\[
\Lambda_{\mathrm{CSL}}^\varrho\ge\Lambda_{\mathrm{DP}}^\varrho
\quad\Longleftrightarrow\quad
R K_{\mathrm{CSL}}^\varrho(\dx,R;r_C)
\ge\ellstar K_{\mathrm{DP}}^\varrho(\dx/R).
\]
\end{corollary}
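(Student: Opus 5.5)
The corollary follows directly from the finite spherical-profile conversion theorem, once the sign and positivity issues are settled. The plan is to show that both kernels are nonnegative and that $K_{\mathrm{DP}}^\varrho(\dx/R)>0$ for every $\dx>0$, so the theorem's hypothesis holds automatically. The comparison $\Lambda_{\mathrm{CSL}}^\varrho\ge\Lambda_{\mathrm{DP}}^\varrho$ is then equivalent to the ratio being at least one, and clearing the positive denominator gives the stated inequality.

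\emph{Positivity.} Since $\sinc z\le 1$ for all real $z$, with equality only at $z=0$, the integrand of $K_{\mathrm{DP}}^\varrho$ is $|F_\varrho(u)|^2(1-\sinc(u\dx/R))\ge0$. For $\dx>0$ the factor $1-\sinc$ vanishes only at $u=0$. The form factor is continuous with $F_\varrho(0)=1$, so $|F_\varrho(u)|^2>1/2$ on some interval $(0,\delta)$. On that interval the integrand is continuous and strictly positive, which gives $K_{\mathrm{DP}}^\varrho>0$. Finiteness of the integral is not needed for the inequality; if necessary one notes that $|F_\varrho|^2$ is integrable, since $F_\varrho$ is, up to rescaling, the Fourier transform of $\varrho$ on a line, or one simply works with the value $E_G^\varrho$ defined in the paper. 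The same argument shows $K_{\mathrm{CSL}}^\varrho\ge0$, and also $\Lambda_{\mathrm{DP}}^\varrho>0$ because $m,\tau,G,\hbar>0$.

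\emph{Conclusion.} With $\Lambda_{\mathrm{DP}}^\varrho>0$, we have $\Lambda_{\mathrm{CSL}}^\varrho\ge\Lambda_{\mathrm{DP}}^\varrho$ if and only if the ratio is at least one. By the theorem this is $R K_{\mathrm{CSL}}^\varrho/(\ellstar K_{\mathrm{DP}}^\varrho)\ge1$. Multiplying by the positive quantity $\ellstar K_{\mathrm{DP}}^\varrho$ yields the claim.

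The only step with any content is the strict positivity of $K_{\mathrm{DP}}^\varrho$, and the continuity of $F_\varrho$ at $0$ is the key fact used there. An alternative route avoids Fourier arguments altogether: $E_G^\varrho$ is the Newtonian energy of the nonzero signed density $\varrho_+-\varrho_-$, and the Coulomb kernel is strictly positive definite.
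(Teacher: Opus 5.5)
Your proposal is correct and follows the paper's route: the corollary is read off from the finite spherical-profile conversion theorem. Your check that $K_{\mathrm{DP}}^\varrho(\dx/R)>0$ whenever $\dx>0$ supplies the theorem's hypothesis that the corollary silently drops, though even this is avoidable: multiplying $\Lambda_{\mathrm{CSL}}^\varrho\ge\Lambda_{\mathrm{DP}}^\varrho$ by the positive constant $R\amu^2/(\lambda m^2\tau)$ gives $RK_{\mathrm{CSL}}^\varrho\ge\ellstar K_{\mathrm{DP}}^\varrho$ directly, with no division by $K_{\mathrm{DP}}^\varrho$. One caution on your aside: $|F_\varrho|^2$ need not be integrable for every normalized profile, because along a line $F_\varrho$ is the Fourier transform of the projected density $\int\varrho\,dx\,dy$, which fails to be square-integrable for sufficiently singular profiles (infinite self-energy); as you say, however, finiteness is not needed for the inequality.
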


\begin{corollary}[Homogeneous sphere]
\label{cor:homogeneous-sphere}
For a rigid homogeneous sphere,
\[
F(u)=
\begin{cases}
1,&u=0,\\[1mm]
3\dfrac{\sin u-u\cos u}{u^3},&u\ne0,
\end{cases}
\]
and the DP shape kernel reduces to
\[
K_{\mathrm{DP}}^{\mathrm{sph}}(a)=
\begin{cases}
\dfrac{a^2}{2}-\dfrac{3a^3}{16}+\dfrac{a^5}{160},&0\le a\le2,\\[2mm]
\dfrac65-\dfrac1a,&a\ge2.
\end{cases}
\]
Thus
\[
\frac{\Lambda_{\mathrm{CSL}}^{\mathrm{sph}}}{\Lambda_{\mathrm{DP}}^{\mathrm{sph}}}
=\frac{R K_{\mathrm{CSL}}^{\mathrm{sph}}(\dx,R;r_C)}
{\ellstar K_{\mathrm{DP}}^{\mathrm{sph}}(\dx/R)}.
\]
\end{corollary}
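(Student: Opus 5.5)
The corollary has three parts: the form factor, the closed form of $K_{\mathrm{DP}}^{\mathrm{sph}}$, and the ratio formula. I will take them in that order.

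\emph{Form factor.} Put $\varrho(r)=3/(4\pi R^3)$ for $r\le R$ into the radial formula for $F_\varrho$. With $s=r/R$ this gives
\[
F(u)=3\int_0^1 s^2\,\frac{\sin(us)}{us}\,ds=\frac{3}{u^3}\int_0^{u} t\sin t\,dt .
\]
One integration by parts yields $3(\sin u-u\cos u)/u^3$. Since $\sin u-u\cos u=u^3/3+O(u^5)$, the continuous value at $u=0$ is $1$.

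\emph{The kernel $K_{\mathrm{DP}}^{\mathrm{sph}}$.} I will not evaluate $\frac{2}{\pi}\int_0^\infty |F(u)|^2(1-\sinc(au))\,du$ directly. Instead I use the identity stated just before the theorem, $E_G^\varrho=(Gm^2/R)K_{\mathrm{DP}}^\varrho(\dx/R)$, which I take from the paper. It converts the kernel into position space:
\[
K_{\mathrm{DP}}^{\mathrm{sph}}(a)=2\bigl[U(0)-U(aR)\bigr]\,\frac{R}{G m^2}.
\]
Here $U(d)$ is the mutual Newtonian interaction energy (taken positive, $+Gm^2\iint\varrho(x)\varrho(y)/|x-y|$) of two uniform balls of mass $m$ whose centres are a distance $d$ apart. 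The factor $2$ comes from expanding the quadratic form in $\varrho_+-\varrho_-$: there are two self-terms and two cross terms, each weighted by $\tfrac12$.

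\emph{Computing $U$.} The self-energy is standard, $U(0)=\tfrac65\,Gm^2/R$.

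For $d\ge 2R$ the balls do not overlap. By Newton's shell theorem $U(d)=Gm^2/d$, so $K=\tfrac65-\tfrac1a$ follows at once.

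For $d<2R$ I will use $U(d)=m\int\varrho(x-d\hat z)\,\Phi(x)\,d^3x$, where $\Phi$ is the potential of one ball:
\[
\Phi(r)=\frac{Gm}{2R^3}\,(3R^2-r^2)\quad (r\le R),\qquad \Phi(r)=\frac{Gm}{r}\quad (r>R).
\]
The displaced ball then splits into its part inside the first ball and its part outside it. Integrating these in bipolar coordinates $(r,r')$ gives a polynomial in $a$. I expect it to be $U=\tfrac{Gm^2}{R}\bigl(\tfrac65-\tfrac{a^2}{4}+\tfrac{3a^3}{32}-\tfrac{a^5}{320}\bigr)$, which after doubling matches the stated $\tfrac{a^2}{2}-\tfrac{3a^3}{16}+\tfrac{a^5}{160}$.

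This polynomial is the main obstacle, because the piecewise integration is bookkeeping-heavy. I will check it in two ways:
\begin{itemize}
\item continuity at $a=2$: $2-\tfrac32+\tfrac15=\tfrac7{10}=\tfrac65-\tfrac12$;
\item the small-$a$ term $a^2/2$, obtained independently from the Fourier form as $\frac{2}{\pi}\int_0^\infty |F|^2u^2a^2/6\,du$, using $\int_0^\infty |F(u)|^2u^2\,du=3\pi/2$ (Parseval for the ball indicator).
\end{itemize}
If the bipolar integral proves awkward, a fallback is to integrate $\tfrac{d^2U}{dd^2}$, which is related to the overlap-lens volume, twice. The boundary conditions for that are $U'(0)=0$, $U(0)=\tfrac65\,Gm^2/R$, and matching $Gm^2/d$ at $d=2R$.

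\emph{Ratio.} The displayed ratio is the finite spherical-profile conversion theorem specialized to $\varrho$ uniform. Its hypothesis $K_{\mathrm{DP}}>0$ for $\dx>0$ holds because $|F|^2>0$ almost everywhere and $1-\sinc>0$ for nonzero argument. It can also be read off directly from the explicit formula.
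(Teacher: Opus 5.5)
Your position-space route is legitimate and differs from the paper's. The paper substitutes $F$ into the Fourier-space kernel and evaluates $\frac{2}{\pi}\int_0^\infty|F(u)|^2(1-\sinc(au))\,du$ separately for $0\le a\le2$ and $a\ge2$. Going through $E_G^\varrho$ instead (the identity does hold, by the Fourier representation of $1/|x|$) replaces that oscillatory integral with the shell theorem and the interior potential. It also explains the breakpoint $a=2$ as the onset of overlap.

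However, the overlap branch as written has a factor-of-two error, masked by a compensating wrong guess. With $U(d)=Gm^2\iint\varrho(x)\varrho(y-d\hat z)/|x-y|$ as you define it, expanding the quadratic form gives $E_G^{\mathrm{sph}}=\tfrac12\bigl[2U(0)-2U(d)\bigr]=U(0)-U(d)$. Hence
\[
K_{\mathrm{DP}}^{\mathrm{sph}}(a)=\bigl[U(0)-U(aR)\bigr]\frac{R}{Gm^2},
\]
with no prefactor $2$. This is what your own ``each weighted by $\tfrac12$'' argument yields, and it is what you silently use for $a\ge2$; the doubled formula would give $\tfrac{12}{5}-\tfrac{2}{a}$ there.

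In the overlap regime you pair the spurious $2$ with a polynomial whose $a$-dependent coefficients are halved, and that polynomial cannot be the interaction energy. At $a=2$ it equals $\tfrac{17}{20}\,Gm^2/R$ instead of the shell-theorem value $Gm^2/(2R)$. Its curvature $U''(0)=-Gm^2/(2R^3)$ also contradicts $\nabla_d^2U(0)=-4\pi Gm^2\int\varrho^2=-3Gm^2/R^3$, which requires $U''(0)=-Gm^2/R^3$. The bipolar integral actually gives
\[
U(aR)=\frac{Gm^2}{R}\Bigl(\frac65-\frac{a^2}{2}+\frac{3a^3}{16}-\frac{a^5}{160}\Bigr),
\]
so executing your plan literally would return twice the stated kernel. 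Neither of your checks detects this, because both test the target $K$ rather than your $U$ or your conversion factor.

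The repair is to drop the $2$. Note also that with your convention $U(0)=\tfrac65Gm^2/R$ is twice the self-energy $\tfrac35Gm^2/R$, which is likely where the convention slipped. For the fallback, it is the radial Laplacian $d^{-1}\partial_d^2(dU)$, not $U''$, that equals $-4\pi Gm^2V_{\mathrm{lens}}(d)/V^2$, with $V_{\mathrm{lens}}=\frac{\pi}{12}(4R+d)(2R-d)^2$ and $V=\frac43\pi R^3$. The corrected $U$ satisfies this identity and agrees with $Gm^2/d$, together with its first two derivatives, at $d=2R$. It then gives exactly $\tfrac{a^2}{2}-\tfrac{3a^3}{16}+\tfrac{a^5}{160}$. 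The form-factor computation and the ratio step, including your positivity argument for $K_{\mathrm{DP}}^{\mathrm{sph}}>0$, are fine.
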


\begin{proof}
The normalized density is $\varrho=3/(4\pi R^3)$ on the ball of radius $R$. Direct integration of its radial Fourier transform gives the stated form factor. Substitution into the DP kernel, followed by elementary evaluation in the overlapping regimes $0\le a\le2$ and $a\ge2$, gives the displayed piecewise polynomial. The ratio then follows from the finite spherical-profile conversion theorem.
\end{proof}

\begin{remark}
Finite-size corrections do not, by themselves, reintroduce mass or time dependence. They replace the point-particle factor $d_{\mathrm{eff}}\Krc(\dx)$ by the spherical geometry factor
\[
R\frac{K_{\mathrm{CSL}}^\varrho(\dx,R;r_C)}{K_{\mathrm{DP}}^\varrho(\dx/R)}.
\]
The remaining finite-size question is how strongly the normalized density profile changes this geometry factor.
\end{remark}

The two kernels have a useful physical reading. The CSL kernel is controlled by the localization length $r_C$: it weights the mass form factor at wavelengths comparable with $r_C$ and the branch separation. The DP kernel is controlled by the self-energy of the difference between two displaced mass distributions, naturally measured on the radius scale $R$. Hence, the ratio is sensitive to
\[
s=\frac{R}{r_C},
\qquad
a=\frac{\dx}{R}.
\]
The first parameter asks whether the particle is small or large compared with the CSL localization length; the second asks whether the branches are close or well separated compared with the particle size.

Figure~\ref{fig:finite-identity} checks the finite-profile formula by evaluating its two sides independently on the illustrative protocol set. Its purpose is not to infer the identity numerically, but to show how heterogeneous radii, separations, and radial profiles collapse to the single geometry quotient predicted by the theorem.

\begin{figure*}[tbp]
\centering
\includegraphics[width=0.90\textwidth]{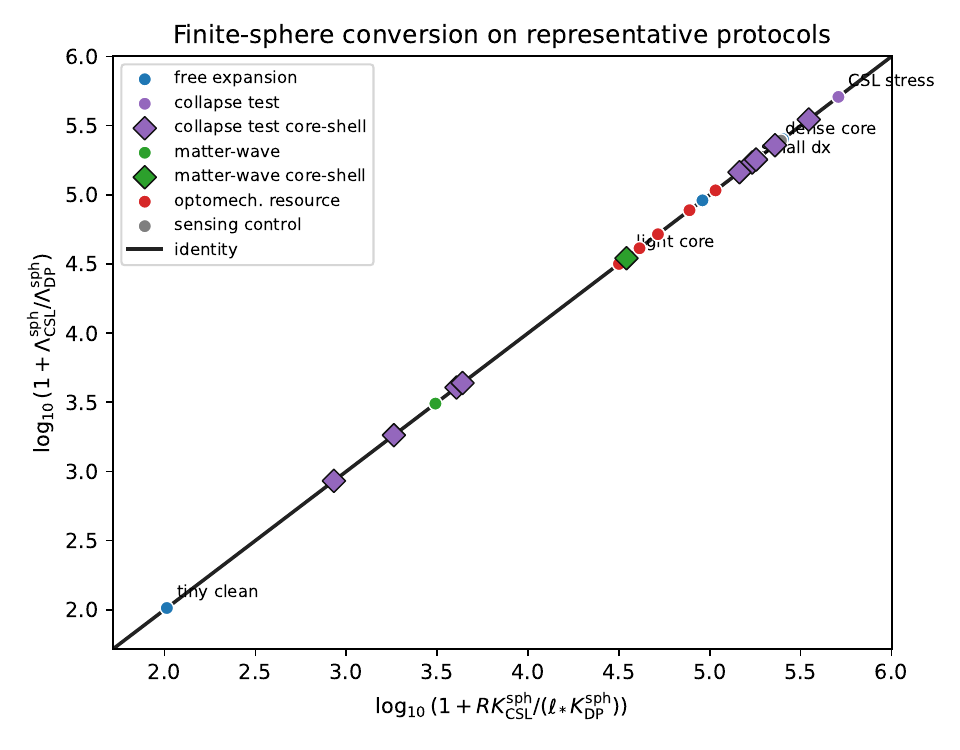}
\caption{Numerical evaluation of the finite spherical-profile identity for the illustrative protocol set. Diamond markers denote core-shell profiles. The horizontal coordinate is the geometry factor $R K_{\mathrm{CSL}}^\varrho/(\ellstar K_{\mathrm{DP}}^\varrho)$; the vertical coordinate is the ratio computed independently from the two finite-size exponents. Equality with the diagonal is the analytic content of the conversion theorem, rather than a fitted relation.}
\label{fig:finite-identity}
\end{figure*}

\section{A Shape-Engineering Conjecture}
\label{sec:shape_engineering_conjecture}
The spherical-profile theorem removes mass, overall density scale, and interrogation time from the ratio, but it leaves the normalized radial profile. We now isolate that dependence and ask whether internal mass placement can systematically favor one side of the CSL/DP comparison. Write
\[
Q_\varrho(s,a):=
\frac{K_{\mathrm{CSL}}^\varrho(aR,R;r_C)}{K_{\mathrm{DP}}^\varrho(a)},
\qquad
s=\frac{R}{r_C},
\qquad
a=\frac{\dx}{R}.
\]
The dimensionless quotient $Q_\varrho$ contains all dependence on the profile at fixed $(s,a)$. Radial mass engineering changes the CSL/DP ratio precisely when it changes this quotient.

Compare it with the homogeneous-sphere value
\[
Q_{\mathrm{hom}}(s,a)=
\frac{K_{\mathrm{CSL}}^{\mathrm{sph}}(aR,R;r_C)}{K_{\mathrm{DP}}^{\mathrm{sph}}(a)}
\]
through the profile amplification factor
\[
\mathcal A_\varrho(s,a)
=\frac{Q_\varrho(s,a)}{Q_{\mathrm{hom}}(s,a)}.
\]

Thus $\mathcal A_\varrho>1$ means that the profile increases the CSL/DP geometry factor relative to a homogeneous sphere with the same $R$ and $\dx/R$, whereas $\mathcal A_\varrho<1$ means that it suppresses the factor.

The homogeneous sphere is the natural baseline because it removes compositional design from the comparison. A value $\mathcal A_\varrho\ne1$ therefore records a profile effect rather than a change in total mass, radius, or separation ratio. Values above one shift the relative exponent toward CSL compared with the homogeneous sphere; values below one shift it toward DP.

For a two-layer core-shell profile, let $f\in(0,1)$ be the core radius divided by $R$, and let $\eta>0$ be the core density divided by the shell density before normalization. Thus $\eta>1$ describes a denser core, while $\eta<1$ biases mass toward the shell. If
\[
F_{\mathrm{sph}}(u)=3\frac{\sin u-u\cos u}{u^3},
\qquad F_{\mathrm{sph}}(0)=1,
\]
then direct integration gives the normalized two-layer form factor
\begin{equation}
F_{f,\eta}(u)
=\frac{F_{\mathrm{sph}}(u)+(\eta-1)f^3F_{\mathrm{sph}}(fu)}
{1+(\eta-1)f^3}.
\label{eq:core-shell-form-factor}
\end{equation}
The corresponding core mass fraction is
\[
\frac{m_{\mathrm{core}}}{m}
=\frac{\eta f^3}{1+(\eta-1)f^3}.
\]
Equations~\eqref{eq:core-shell-form-factor} and the two kernel definitions make every value below directly reproducible.

Direct quadrature suggests a profile inversion. When $R\ll r_C$, the CSL filter has little sensitivity to internal radial structure, whereas the DP self-energy remains sensitive to mass placement. A dense core can then give $\mathcal A_{f,\eta}<1$. As $R/r_C$ increases, the Gaussian CSL filter resolves progressively shorter radial scales, and the same dense-core profile can cross to $\mathcal A_{f,\eta}>1$. Shell-weighted profiles remain above one in the displayed examples, but the calculations do not establish that they are uniformly optimal.

The evidence in this section comes from direct quadrature of the two kernels for the same normalized form factor, rather than from fitting a parametric model. Figure~\ref{fig:finite-identity} checks the ratio implementation, while Figure~\ref{fig:core-shell} and Table~\ref{tab:two-layer-scan} record the profile dependence that motivates the conjecture.

\begin{conjecture}[Core-shell inversion]
Fix a separation ratio $a>0$ and a finite density-contrast bound $M>1$. Let $\varrho_{f,\eta}$ range over normalized two-layer spherical profiles with core radius $fR$, $0<f<1$, and core-to-shell density ratio $\eta\in[M^{-1},M]$. A maximizer of
\[
\mathcal A_{f,\eta}(s,a)
\]
over these profiles is attained on an extreme-contrast branch $\eta=M$ or $\eta=M^{-1}$. As $s=R/r_C$ increases, the maximizing branch changes from shell weighted to core weighted, and for fixed $a$ and $M$ the two extreme branches exchange order only finitely many times. In particular, sufficiently concentrated dense-core profiles cross from $\mathcal A_{f,\eta}<1$ for $s\ll1$ to $\mathcal A_{f,\eta}>1$ for $s\gg1$.
\end{conjecture}

The finite-crossing assertion is intentionally weaker than uniqueness of the transition: the available quadrature does not prove monotonicity of the Gaussian-filtered form-factor integral. A proof would convert the cancellation theorem into a profile-selection principle. A counterexample would instead identify radial designs more subtle than the two extreme branches.

\begin{figure*}[tbp]
\centering
\includegraphics[width=0.94\textwidth]{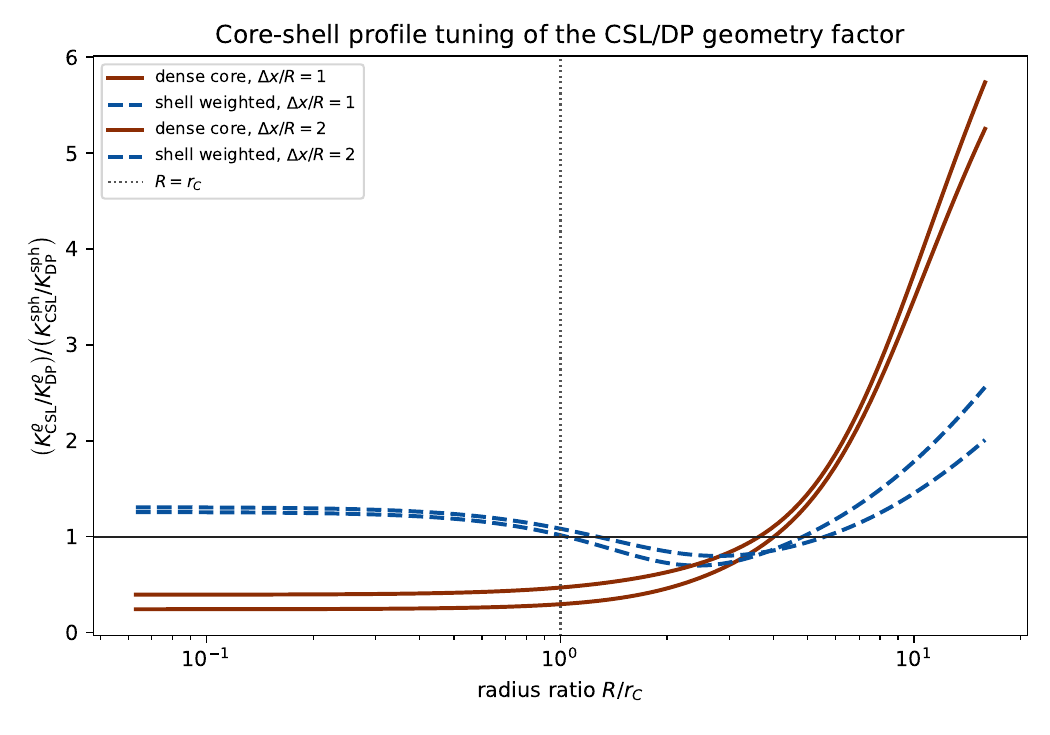}
\caption{Core-shell tuning of the spherical CSL/DP geometry factor. The plotted quantity is $\mathcal A_\varrho=Q_\varrho/Q_{\mathrm{hom}}$ at fixed $R/r_C$ and $\dx/R$. Shell-weighted profiles enhance the ratio in the displayed small-radius regime, while the dense-core profile crosses from suppression to amplification as $R/r_C$ increases.}
\label{fig:core-shell}
\end{figure*}

\begin{table*}[tbp]
\centering
\begin{tabular}{lrrrrr}
\toprule
profile & $R/r_C$ & $f$ & $\eta$ & $m_{\mathrm{core}}/m$ & $\mathcal A_{f,\eta}(s,1)$\\
\midrule
dense core  & $0.2$ & $0.2$ & $100$  & $0.446$ & $0.244$\\
dense core  & $10$  & $0.2$ & $100$  & $0.446$ & $3.480$\\
shell weighted & $0.2$ & $0.9$ & $0.02$ & $0.051$ & $1.244$\\
shell weighted & $10$  & $0.9$ & $0.02$ & $0.051$ & $1.785$\\
\bottomrule
\end{tabular}
\caption{Representative two-layer values at $a=\dx/R=1$, obtained by direct quadrature. The same dense-core profile suppresses the CSL/DP geometry factor at $R/r_C=0.2$ and amplifies it at $R/r_C=10$. The total mass does not enter these values.}
\label{tab:two-layer-scan}
\end{table*}

\section{Experimental Implications and Open Problems}
\label{sec:open_problems}
The exact conversion laws reduce the theoretical comparison to geometry, but an experiment cannot vary geometry independently of preparation fidelity, environmental decoherence, heating, and readout. The mathematically largest value of $\Xic$ need not occur in an experimentally accessible region, and maximizing the ratio alone can select protocols for which both exponents are negligible. A useful optimization must therefore retain both relative ordering and absolute sensitivity.

\begin{problem}[Accessible geometry frontier]
For a specified levitated-mass architecture, let $\mathcal A$ be the set of protocols satisfying quantitative preparation, coherence-survival, heating, and readout constraints. Determine the attainable region
\[
\left\{\bigl(\Lambda_{\mathrm{CSL}}(P),
\Lambda_{\mathrm{DP}}(P),\Xic(P)\bigr):P\in\mathcal A\right\}
\]
and its Pareto frontier. In particular, determine whether $\mathcal A$ contains protocols on both sides of $\Xic=1$ for which at least one exponent is experimentally resolvable.
\end{problem}

The answer necessarily depends on the chosen CSL parameters and on the experimentally admissible ranges of $R$, $\dx$, $\tau$, and radial profile. The illustrative markers in Figures~\ref{fig:phase-diagram}--\ref{fig:finite-identity} are not a substitute for constructing $\mathcal A$ from a specific apparatus.

The random-unitary theorem also changes what constitutes a decisive experiment. Measuring only the decay of an off-diagonal density-matrix element reconstructs an unconditional kernel, but does not determine how that kernel is unravelled in individual runs. The heating rate in \eqref{eq:momentum-diffusion} supplies one consistency check, not a discriminator, because it is fixed by the same random-kick generator. Discrimination must instead use information that is absent from the ensemble kernel, such as accessible environmental records, trajectory-conditioned statistics, higher-order temporal correlations, or many-body amplification laws.

\begin{problem}[Discriminating unravellings]
Among dynamical models that reproduce the same point-particle separation kernel $\Krc$, determine a minimal experimentally accessible collection of observables that distinguishes state-independent random-unitary noise from objective localization. In particular, determine whether trajectory-conditioned records, multi-time correlations, or controlled many-body mass scaling can provide a discriminator that is robust to ordinary environmental decoherence.
\end{problem}

The finite spherical-profile theorem resolves the first geometry-reduction question: fixed normalized spherical profiles preserve the mass/time cancellation. Real nanoparticle protocols, however, may require nonspherical particles, anisotropic traps, internal-temperature models, dissipative collapse variants, or experimentally calibrated decoherence budgets. The next problem is to determine which corrections preserve a geometry-only comparison and which introduce genuinely material-dependent behavior.

\begin{problem}[Nonspherical extensions]
Extend the finite-size model from rigid spherical profiles to nonspherical particles, anisotropic mass distributions, internal-temperature corrections, and dissipative collapse variants. Determine which classes still admit a CSL/DP comparison of the form
\[
\frac{\Lambda_{\mathrm{CSL}}^{\mathrm{shape}}}
{\Lambda_{\mathrm{DP}}^{\mathrm{shape}}}
=\frac{1}{\ellstar}\,
\mathcal G(\text{dimensionless geometry})
\]
and which classes force material, temperature, or internal-structure parameters to remain in the ratio.
\end{problem}

\section{Discussion}

We have separated a comparison that is often embedded in a much larger experimental design problem. The analysis does not model trap engineering, readout noise, gas collisions, blackbody emission, feedback, or release-and-recapture dynamics. Instead, it determines which parameters control the relative CSL and DP exponents after the geometry of an idealized spatial superposition has been specified.

The random-unitary realization places a necessary limit on the interpretation of that comparison. The point-particle CSL kernel is an experimentally meaningful law for ensemble coherence, but it is not an ontologically complete description. A measured visibility decay consistent with $\Krc$ would constrain the unconditional generator; it would not, on that evidence alone, establish spontaneous localization or exclude a classical stochastic Hamiltonian. This does not weaken searches for CSL-like decoherence. It clarifies the additional burden of a collapse test: the experiment must either rule out relevant random-unitary alternatives or probe structure that those alternatives do not share.

For the spherical models considered here, the answer is geometric. In the point-particle approximation, mass and interrogation time cancel exactly from the CSL/DP ratio. They remain essential experimentally because they increase the absolute size of both exponents, but they do not decide which exponent is larger. For the GRW reference parameters, the resolved crossover $x_*\approx1.91\,\mathrm{nm}$ is consequently independent of particle mass and interrogation time. This value is a feature of the stated point-particle proxy and parameter choice, not a universal boundary between CSL and DP phenomenology.

The finite-size theorem shows that this cancellation is not merely an artifact of replacing the particle by a point mass with a cutoff. For any fixed normalized spherical mass profile, the finite CSL kernel and DP self-energy kernel again leave a geometry-only ratio. This matters for nanoparticle design because radius and internal density profile are not just secondary corrections. Once mass and time cancel from the comparison, the remaining freedom is precisely the geometry of the object and the geometry of the superposition.

The core-shell conjecture concerns only this residual geometric freedom. It does not assert that a shell-heavy or dense-core nanoparticle is easier to fabricate, cool, prepare in superposition, or read out. It predicts that, at fixed total mass, outer radius, and separation ratio, radial mass placement can change the relative CSL/DP exponent and that the favorable extreme profile changes as $R/r_C$ increases. Establishing or refuting this statement would determine whether internal mass placement is a genuine optimization variable or merely a small correction to homogeneous-sphere estimates.

Thus the contribution is best viewed as a reduction of a multidimensional design comparison. It separates absolute sensitivity, which still requires mass, time, isolation, and platform-specific noise modeling, from relative CSL/DP dominance, which in these spherical models is controlled by geometry, and from identification of the underlying stochastic dynamics. The natural next step is to combine the geometry factors proved here with realistic protocol constraints and an explicit unravelling-discrimination strategy for a specific experimental architecture.

\section*{Data Availability}

No original experimental data were created or analyzed in this study. The
numerical values shown in the figures and table are obtained by direct
evaluation of the equations and parameter values given in the article. The
custom plotting script has not been moved to a public repository, but it can be
made available upon reasonable request.

\bibliography{Davila_Milburn_2026a_PRD}

\end{document}